\PassOptionsToPackage{prologue,dvipsnames}{xcolor}
\documentclass[letterpaper]{scrartcl}
\usepackage[total={16.2cm, 23.5cm}]{geometry}
\usepackage{microtype}
\usepackage{amsmath,amsthm,amssymb, mathtools}
\usepackage{tikz}
\usetikzlibrary{patterns,decorations.pathreplacing,arrows.meta,shapes.geometric}
\usepackage{hyperref}
\hypersetup{
pdfencoding=auto, psdextra,
colorlinks=true,citecolor=green!50!black,linkcolor=red!60!black,
}
\usepackage[sort&compress,nameinlink]{cleveref}
\usepackage{nicefrac}
\usepackage[square,numbers,sort]{natbib}
\usepackage{comment}
\usepackage{paralist}
\usepackage{xcolor}
\usepackage{colortbl}
\usepackage{subcaption}
\usepackage{booktabs}
\usepackage{listings}
\usepackage{fancyhdr}
\usepackage{microtype}
\usepackage[linesnumbered,ruled,vlined]{algorithm2e}
\usepackage[linecolor=green!70!black, backgroundcolor=green!10, bordercolor=black, textsize=small]{todonotes}
\definecolor{winered}{rgb}{0.6,0.1,0.1}

\newcommand{\myemph}[1]{{\color{winered}\emph{#1}}}
\newtheorem{theorem}{Theorem}
\newtheorem{lemma}{Lemma}

\theoremstyle{definition}
\newtheorem{definition}{Definition}
\newtheorem{example}{Example}
\theoremstyle{remark}

\crefname{theorem}{Theorem}{Theorems}
\crefname{lemma}{Lemma}{Lemmas}
\crefname{proposition}{Proposition}{Propositions}
\crefname{corollary}{Corollary}{Corollaries}
\crefname{definition}{Definition}{Definitions}
\crefname{example}{Example}{Examples}
\crefname{section}{Section}{Sections}
\crefname{subsection}{Section}{Sections}
\numberwithin{equation}{section}
\newcommand{\cost}{\operatorname{cost}}

\newcommand{\Entropy}{\mathcal E}
\newcommand{\Pay}{\Phi}
\newcommand{\Obj}{\mathrm{sc}}
\newcommand{\pos}[1]{\left(#1\right)_{+}}

\allowdisplaybreaks[1]
\title{Core Up-To-One for Participatory Budgeting with Additive Utilities}

\author{Piotr Skowron}
\date{}
\begin{document}
\maketitle
\thispagestyle{plain}

\begin{abstract}
We study participatory budgeting with additive utilities. We introduce
a selection rule based on utility-weighted harmonic entropy and a cost
penalty, and prove that it returns budget-feasible outcomes satisfying
core-up-to-one.
\end{abstract}

\section{Introduction}\label{sec:introduction}

We extend the notion of harmonic entropy from the recent work of
\citet{BGP2026} on the core in approval-based committee elections
to additive utilities and unequal candidate costs. Our construction
and proof largely follow the ideas of that paper.

\section{Model}\label{sec:model}
Let $C=\{c_1,\ldots,c_m\}$ and $N=\{1,\ldots,n\}$ be the sets of candidates
and voters, respectively, where $m,n\ge1$.
Each candidate $c\in C$ is associated with a cost, denoted by $\cost(c)$,
and there is a budget $b$ that can be spent on selecting candidates.
Consequently, a subset $W\subseteq C$ is \emph{*feasible*} if
$\cost(W)\le b$, where

\begin{align*}
 \cost(W)=\sum_{c\in W}\cost(c).
\end{align*}

Each voter $i$ has a utility $u_i(c) \in [0,1]$ for candidate $c$.
We assume that for each voter $i$ there is at least one candidate $c$ with
$u_i(c)>0$. We extend the notation to sets by writing

\begin{align*}
 u_i(W) = \sum_{c\in W}u_i(c) \qquad(W\subseteq C).
\end{align*}

Thus utilities are nonnegative and additive. Normalizing utilities is without
loss of generality, since the considered notion of core does not contain
comparisons of utilities between different voters.
We write $W+c$ and $W-c$ for $W\cup\{c\}$ and $W\setminus\{c\}$,
respectively.
Below we define our main notion of proportionality.

\begin{definition}[Core]\label{def:core}
A feasible outcome $W$ is in the \emph{*core*} if, for every nonempty
$S\subseteq N$ and every subset of candidate $T\subseteq C$ with $\cost(T)\le b \cdot \nicefrac{|S|}{n}$,
there exists a voter $i\in S$ such that
\begin{align*}
 u_i(T)\le u_i(W).
\end{align*} \hfill $\lrcorner$
\end{definition}

\begin{definition}[Additive-one core]\label{def:additive-core}
A feasible outcome $W$ satisfies the \emph{*core up-to-one*} if, for every
nonempty $S\subseteq N$ and every subset of candidate $T\subseteq C$ with $\cost(T)\le b \cdot \nicefrac{|S|}{n}$, there exists
$i\in S$ such that
\begin{align*}
 u_i(T) < u_i(W) + 1.
\end{align*} \hfill $\lrcorner$
\end{definition}

\section{The Max-Payment-Entropy Rule}\label{sec:rule}
For each candidate $c$, we define $q_c = n \cdot\nicefrac{\cost(c)}{b}$ as the required number of voters that can afford buying candidate $c$.
Given a set of candidates $W\subseteq C$, a \myemph{loose payment system} consists of the set of non-negative payments $\{p_{i}(c)\}_{i\in N, c \in W}$ and remaining budgets of the voters $\{r_{i}\}_{i\in N}$ that satisfy the following consditions:
\begin{enumerate}
  \item all voters have initially one unit of money: $r_i + \sum_{c \in W} p_{i}(c) = 1$ for all $i \in N$,
  \item the voters did not spent more than they initially had: $r_i \ge 0$ for all $i \in N$,
  \item the payments towards selected candidates sum up at most to the required values needed to warrant candidates' selection: $\sum_{i \in N}p_{i}(c) \le q_c$ for all $c \in W$. Note that this is difefrent from a classic axiom of priceability, where the payments are required to sum up exactly to the required values $q_c$. 
  \item the voters can pay only for the candidates they support: $p_i(c) = 0$ for all $i \in N, c \in W$ such that $u_i(c) = 0$.
\end{enumerate}
Let $\Pay(W)$ denote the set of such loose payment systems.

\subsection{Utility-Weighted Harmonic Entropy}

Consider nonnegative masses $x = (x_1, \ldots, x_d)$ with $\sum_{j} x_j=1$, and
positive coordinate weights $a = (a_1, \ldots, a_d)$. The ratio $\nicefrac{x_j}{a_j}$ is the
\myemph{density} of coordinate $j$. For every real value $t \ge 0$, we define

\begin{equation}\label{eq:water-level}
 f_t^a(x) = \max_{J \subseteq \{1, \ldots, d \}} \frac{\sum_{j \in J} x_j}{t + \sum_{j \in J} a_j}.
\end{equation}

In particular:
\begin{align*}
 f_0^a(x)=\max_j\frac{x_j}{a_j},
\end{align*}

\begin{definition}[Weighted harmonic entropy]\label{def:entropy}
The weighted harmonic entropy of a pair of vectors $(x,a)$ is
\begin{equation}\label{eq:entropy}
 F_a(x) = \sum_{\ell = 0}^{\infty} \left(\frac{1}{\ell + 1} - f_\ell^a(x) \right).
\end{equation}
\hfill $\lrcorner$
\end{definition}

Given a payment system $(p = \{p_{i}(c)\}_{i\in N, c \in W}, r = \{r_{i}\}_{i\in N})$ for each voter $i \in N$ we construct the following vectors:
\begin{equation}\label{eq:voter-vectors}
 \begin{aligned}
 x_i &= \bigl(r_i, (p_{i}(c))_{c\in W\colon u_i(c) > 0}\bigr),\\
 a_i &= \bigl(1, (u_i(c))_{c\in W\colon u_i(c) > 0}\bigr).
 \end{aligned}
\end{equation} 
The weighted harmonic entropy of voter's $i$ payments is then given by $F_{a_i}(x_i)$. It is easy to note that $F_{a_i}(x_i)$ is finite, so the entropy is well defined. 

\begin{example}[Prices per unit of utility]
If a voter $i$ values two selected candidates at $1$ and $1/2$, respectively, then the weights are given as
\begin{align*}
  a_i = (1, 1, \nicefrac{1}{2}).
\end{align*}
Assuming the voter pays $\nicefrac{2}{5}$ for the first candidate and $\nicefrac{1}{5}$ for the second, we have:
\begin{align*}
  x_i = (\nicefrac{2}{5}, \nicefrac{2}{5}, \nicefrac{1}{5}).
\end{align*}
\hfill $\lrcorner$
\end{example}

\subsection{The Objective Function}
The weighted harmonic entropy of a subset of candidates $W \subseteq C$ is defined as
\begin{equation}\label{eq:value}
 \Entropy(W) = \max_{(p, r) \in \Pay(W)} \sum_{i\in N} F_{a_i}(x_i).
\end{equation}

The score of a subset $W$ is defined as:
\begin{equation}\label{eq:objective}
 \Obj(W) = \Entropy(W) - \nicefrac{n}{b} \cdot \cost(W).
\end{equation}
\begin{definition}[The Max-Payment-Entropy Rule]\label{def:rule}
The rule selects the sets $W\subseteq C$ that lexicographically maximize $\Obj(W)$. If there are ties, then among the tied outcomes we pick the one with the maximal cost $\cost(W)$.
\end{definition}
Our main theorem is the following.
\begin{theorem}[Main result]\label{thm:main}
The Max-Payment-Entropy Rule return outcomes that satisfy core-up-to-one.
\end{theorem}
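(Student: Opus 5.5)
We argue by contradiction. Suppose $W$ is returned by the rule but core-up-to-one fails. Then there are a nonempty $S\subseteq N$ and a set $T$ with $\cost(T)\le b\cdot\nicefrac{|S|}{n}$ such that every $i\in S$ satisfies $u_i(T)\ge u_i(W)+1$. Fix an optimal loose payment system $(p,r)\in\Pay(W)$ attaining $\Entropy(W)$.

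Feasibility of $W$ comes first, and we would settle it before the main argument. Summing condition 3 over $c\in W$ gives $\sum_c\sum_i p_i(c)\le n$. Naively this does not bound $\cost(W)$, because payments may fall short of $q_c$. The cost penalty handles this. Adding a candidate $c$ raises $\Entropy$ by at most the entropy gain from the money that can be moved onto $c$, and that money is at most $q_c$. So if some $c\in W$ is underpaid, removing it should not lower $\Obj$ by more than a controlled amount. We need a marginal inequality: moving mass $\delta$ from $r_i$ to a new coordinate changes $F_{a_i}$ by at most $\delta$ times a slope bounded by $1$ (the $\ell=0$ term has slope at most $\nicefrac{1}{a_j}$, which must be balanced; this needs care). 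Combined with the penalty $\nicefrac{n}{b}\cost(c)=q_c$, it should show that an optimal $W$ essentially has every $c$ fully paid. Feasibility then follows from $\sum_c q_c\le n$, with the tie-breaking toward maximal cost handling the boundary cases.

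For the core deviation, the plan follows the harmonic-entropy argument of \citet{BGP2026}. We compare $W$ with $W'=W\cup T$, or with $W'=W\cup\{c\}$ for a well-chosen $c\in T\setminus W$, and build a loose payment system for $W'$. Each voter $i\in S$ moves an amount $\varepsilon\cdot u_i(c)$ from $r_i$ (or from her least valuable payments) to $c$. We then compute the first-order change of $\sum_i F_{a_i}(x_i)$. We first establish the derivative of $F_a(x)$ with respect to coordinate $j$. Since $f_\ell^a$ is a max of ratios, its directional derivative is $\nicefrac{1}{(\ell+\sum_{J^*}a)}$ for $j\in J^*$, the optimal set. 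Summing over $\ell$ gives a harmonic-type expression in the utility levels. Adding a new coordinate of weight $u_i(c)$ with tiny mass costs roughly $u_i(c)\cdot f_\ell$-related terms. The key estimate we aim for is that for a voter $i$ with $u_i(W)+1\le u_i(T)$, the marginal entropy gain per unit of money spent on $T$ exceeds $1$. Summed over $S$, this money totals $\sum_{c\in T}q_c\le|S|$. This gives a strict increase of $\Entropy$ exceeding the cost increase $\sum_{c\in T}q_c$, so $\Obj(W')>\Obj(W)$, contradicting optimality.

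The main obstacle is this key estimate. We must show that, at an optimum, each voter's "price per utility" $\lambda_i$ (the density level at which $F_{a_i}$ is marginally flat) satisfies a bound of the form $\lambda_i\cdot(u_i(W)+1)\le 1$. This is where the $+1$ comes from: the remaining budget $r_i$ is a coordinate of weight $1$, acting like one extra unit of utility. The harmonic sum $\sum_\ell(\nicefrac{1}{(\ell+1)}-f_\ell)$ is designed so that its marginal with respect to total weight behaves like $\nicefrac{1}{(\text{total weight})}$. We would first try to prove this by KKT stationarity for the concave maximization in $\Entropy(W)$. We would then average over $c\in T$ weighted by $u_i(c)$, using $u_i(T)\ge u_i(W)+1$ to pick one candidate $c\in T\setminus W$ whose addition is profitable. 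The subtle part is that $T$ may intersect $W$, and that underpayment slack in condition 3 must be accounted for; we would handle both by including $T\cap W$ in the averaging with zero added cost.
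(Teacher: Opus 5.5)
Both halves of your proposal have genuine gaps. The more serious one is feasibility, where your sketch fails on two counts. First, there is no slope bound: moving mass $\delta$ from the reserve onto a fresh zero-mass coordinate raises $F$ by order $\delta\ln(1/\delta)$ (already for $x=(1)$, $a=(1)$), so the marginal gain there is unbounded. Second, an optimal $W$ need not have its candidates fully paid. Take one voter and one candidate $c$ with $u(c)=1$ and $q_c=0.6$. The rule selects $\{c\}$ (score $1-0.6$ versus $0$ for $\varnothing$), yet the unique entropy-maximizing payment is $p(c)=1/2<q_c$. So $\sum_{c\in W}q_c\le n$ cannot be read off the payments. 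What you need is a deletion statement: if $\sum_{c\in W}q_c>n$, then some $d\in W$ has $\Entropy(W)-\Entropy(W-d)<q_d$. Deleting a coordinate can cost more entropy than the money on it, so this is the technical core of the paper. It starts from balanced payments ($p_i(c)\le r_iu_i(c)$, with equality on underfunded candidates). It pours the supporters' reserves into a set $D$ of underfunded candidates maximizing $\Gamma(D)$; a max-flow/min-cut argument shows this is possible while leaving slack. It then picks a still-underfunded $d$ whose coordinate has maximal density for each supporter, observes $x_i=\mathcal T_1(y_i)$, and bounds each supporter's loss by $F(\mathcal T_1(y_i))-F(\mathcal T_{1-s}(y_i))\le s f_0(y_i)=\widehat p_i(d)$, whose sum is $<q_d$. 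Nothing in your outline plays this role.

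For the deviation step, a first-order/KKT computation does not fit the objective. Adding $c$ incurs the whole penalty $q_c$ at once, so you need a finite-step lower bound on the entropy gain. The paper's bound is $F(\mathcal T_s(x))-F(x)\ge s f_0(x)$ for $s\in[0,1]$, from $f_t(\mathcal T_s(x))=f_{t+s}(x)$ and convexity of $t\mapsto f_t$. For balanced payments, where $f_0(x_i)=r_i$, this gives $\Entropy(W+c)-\Entropy(W)\ge\min\{R_c,q_c\}$ with $R_c=\sum_i r_iu_i(c)$, strictly above $q_c$ if $R_c>q_c$. With the max-cost tie-break this forces $R_c<q_c$ for every $c\notin W$. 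Your key estimate also points the wrong way. If $\lambda_i$ is the reserve density $r_i$ (the maximal density at a balanced optimum), then $1=r_i+\sum_{c\in W}p_i(c)\le r_i(1+u_i(W))$. So $\lambda_i(u_i(W)+1)\le 1$ holds only when every payment sits exactly at density $r_i$; what the argument actually uses is $p_i(c)\le r_iu_i(c)$. With that inequality, define prices $\pi_i(c)=p_i(c)$ for $c\in W$ and $\pi_i(c)=r_iu_i(c)$ for $c\notin W$. Every blocking voter then has $\pi_i(T)\ge 1-r_i+r_i\bigl(u_i(T)-u_i(W)\bigr)\ge 1$, while $\sum_{i\in N}\pi_i(T)<\sum_{c\in T}q_c\le|S|$. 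Your averaging over $T$, with $T\cap W$ absorbed by the loose caps, is the right shape for this last step. But gain per unit of money $\ge 1$ holds for every voter under $\mathcal T_s$, so it is not where the blocking condition enters. The blocking condition is what pushes the aggregate $R_c$ up to $q_c$ for some $c\in T\setminus W$.
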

The following part of the paper shows several observations that eventually lead to a proof of the theorem.

\section{Proof of the main theorem}\label{sec:proof}

The proof is divided into a sequence of observations. 

\subsection{Observations About Entropy}

In this section we fix a vector $x = (x_1, \ldots, x_d)$ and the accompaning vector of weights $a = (a_1, \ldots, a_d)$.
In order to reduce the notation overhead we will write $f_t(x)$ and $F(x)$ instead of $f_t^a(x)$ and $F_a(x)$, respectively.

Let us define the excess of mass above density $z\ge0$ by
\begin{equation}\label{eq:proof-excess}
 E_x(z)=\sum_{j=1}^d\pos{x_j-a_jz}.
\end{equation}
Note that:
\begin{align*}\label{eq:proof-excess}
 E_x(z)=\sum_{j=1}^d\pos{x_j-a_jz} = \max_{J \subseteq \{1, \ldots, d \}} \sum_{j\in J}^d\pos{x_j-a_jz}.
\end{align*}

First, we observe that $f_t(x)$ is the unique positive solution of
\begin{equation}\label{eq:proof-root}
 E_x(z) = tz.
\end{equation}

To verify this, set $\tau=f_t(x)$. By definition, for each $J \subseteq \{1, \ldots, d \}$ we have
\begin{align*}
\tau \geq \frac{\sum_{j \in J} x_j}{t + \sum_{j \in J} a_j}.
\end{align*}
Also, the above inequality becomeas an equality for the set $J$ that maximizes the right-hand side expression, call it $J^*$.
After reformulation, we get that for each $J \subseteq \{1, \ldots, d \}$:
\begin{align*}
t\tau \geq \sum_{j\in J}(x_j-a_j\tau) \text{.}
\end{align*}
The above inequlity holds for each $J$, so in particular we have that:
\begin{align*}
t\tau \geq E_x(\tau)\text{.}
\end{align*}
Also, we have:
\begin{align*}
t\tau = \sum_{j\in J^*}(x_j-a_j\tau) \leq E_x(\tau)\text{.}
\end{align*}
The two inequalities can be combined into the equality. It is straightforward to observe that the solution to this inequality is unique. 

Moreover, note that $f_t(x)$ is monotone with respect to $t$.

For $s > 0$, we define a transformation $\mathcal T_s$ as follows. Let $\tau = f_s(x)$, we replace the mass of each coordinate $j$ by $\min\{x_j, a_j\tau\}$. Additionally, we introduce one new coordinate with weight $s$ and mass equal to $s\tau$. Equation~\eqref{eq:proof-root} shows that the removed mass is exactly $s\tau$, so the transformed vector again has total mass of one. 

\begin{lemma}\label{lem:proof-water-filling}
For every $s > 0$ and $t \ge 0$,
\begin{equation}\label{eq:proof-shift}
 f_t(\mathcal T_s(x)) = f_{t+s}(x).
\end{equation}
Furthermore, for every $s \in [0, 1]$ it holds that
\begin{align}
 F(\mathcal T_s(x)) - F(x) &\ge s f_0(x),\label{eq:proof-add-potential}\\
 F(\mathcal T_1(x)) - F(\mathcal T_{1-s}(x))  &\le s f_0(x).\label{eq:proof-delete-potential}
\end{align}
\end{lemma}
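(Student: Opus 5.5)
The plan is to prove the shift identity \eqref{eq:proof-shift} first, by comparing excess functions and using the characterization \eqref{eq:proof-root}. Both potential inequalities then follow from it together with convexity of $g(t):=f_t(x)$ in $t$.

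\textbf{Step 1: the shift identity.} Fix $s>0$, let $\tau=f_s(x)$ and $y=\mathcal T_s(x)$.
\begin{itemize}
\item For $0\le z\le \tau$ and each old coordinate $j$, I will check the pointwise identity $\pos{\min\{x_j,a_j\tau\}-a_jz}=\pos{x_j-a_jz}-\pos{x_j-a_j\tau}$.
\item The new coordinate contributes $\pos{s\tau-sz}=s(\tau-z)$ in this range.
\item Summing and using $E_x(\tau)=s\tau$ from \eqref{eq:proof-root} gives
\begin{align*}
E_y(z)=E_x(z)-E_x(\tau)+s\tau-sz=E_x(z)-sz \qquad (0\le z\le\tau).
\end{align*}
\end{itemize}
Hence on $(0,\tau]$ the equation $E_y(z)=tz$ is equivalent to $E_x(z)=(t+s)z$.

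Now $g$ is nonincreasing in $t$, since each ratio in \eqref{eq:water-level} is. So $f_{t+s}(x)\le f_s(x)=\tau$, and $f_{t+s}(x)$ is therefore a root of $E_y(z)=tz$ lying in $(0,\tau]$. By uniqueness of the positive root, $f_t(y)=f_{t+s}(x)$.

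One caveat concerns the case $t=0$. For $z>\tau$ we have $E_y(z)=0$, since every density in $y$ is at most $\tau$. So for $t=0$ the root is not unique, and I handle this case directly: $f_0(y)$ is the maximal density of $y$, which equals $\tau$ because it is attained by the new coordinate. This is exactly $f_s(x)$.

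\textbf{Step 2: convexity of $g$.} For each fixed $J$, the map $t\mapsto \frac{\sum_{j\in J}x_j}{t+\sum_{j\in J}a_j}$ is convex on $t\ge 0$. The empty set gives $0$, and the maximum of convex functions is convex, so $g$ is convex. Also $g(\ell)\le 1/\ell$ for $\ell\ge1$, since $f_\ell(x)\le \frac{\sum_{j\in J}x_j}{\ell}\le\frac1\ell$. Thus $g(\ell)\to0$, and the series defining $F$ converge.

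\textbf{Step 3: the two inequalities.} By Step 1, the $\tfrac1{\ell+1}$ terms cancel termwise and
\begin{align*}
F(\mathcal T_s(x))-F(x)=\sum_{\ell\ge0}\bigl(g(\ell)-g(\ell+s)\bigr).
\end{align*}

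For \eqref{eq:proof-add-potential}, convexity gives $g(\ell+s)\le(1-s)g(\ell)+s\,g(\ell+1)$. Hence each term is at least $s\bigl(g(\ell)-g(\ell+1)\bigr)$. These lower bounds telescope to $s\,g(0)=s f_0(x)$, using $g(\ell)\to0$.

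For \eqref{eq:proof-delete-potential}, the difference equals $\sum_{\ell}\bigl(g(\ell+1-s)-g(\ell+1)\bigr)$. Convexity gives $g(\ell+1-s)\le s\,g(\ell)+(1-s)g(\ell+1)$, so each term is at most $s\bigl(g(\ell)-g(\ell+1)\bigr)$. Summing again gives at most $s f_0(x)$. Here $\mathcal T_0$ is read as the identity, which is consistent with Step 1 at $s=1$.

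\textbf{Main obstacle.} I expect the main difficulty to be Step 1. It requires verifying the pointwise identity for the truncated coordinates, and ensuring the relevant root lies in the region $z\le\tau$ where the identity holds. Once the shift identity is established, Steps 2 and 3 are routine convexity and telescoping arguments.
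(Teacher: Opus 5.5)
Your proposal is correct and follows the paper's proof essentially step for step: the identity $E_{\mathcal T_s(x)}(z)=E_x(z)-sz$ on $[0,\tau]$, together with uniqueness of the positive root, gives the shift identity, and convexity of $t\mapsto f_t(x)$ plus telescoping gives both potential bounds. Your separate handling of $t=0$ fills a small gap in the paper: there $E_y(z)=0$ has no unique positive root, so you read off $f_0(\mathcal T_s(x))=\tau$ directly as the maximal density, whereas the paper's argument only treats $t>0$ even though the $\ell=0$ term of its telescoping sum needs exactly this case.
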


\begin{proof}
Let $\tau = f_s(x)$. Fix $z \in [0, \tau]$, and consider how applying the operator $\mathcal T_s$ to $x$ changed the value of the excess of mass, that is we are going to bound the expression
\begin{align*}
 E_{\mathcal T_s(x)}(z) - E_x(z) \text{.}
\end{align*}
Obseerve that truncating the existing coordinates at density $\tau$ reduces their excess mass by
runcating the existing coordinates at density $\tau$ reduces their excess mass by
\begin{align*}
\sum_{j=1}^d (x_j - \min\{x_j, a_j\tau\}) = \sum_{j=1}^d\pos{x_j-a_j\tau} = E_x(\tau) = s\tau \text{.}
\end{align*}
The last equality follows from \eqref{eq:proof-root}. Thus, we get that
\begin{align*}
 E_{\mathcal T_s(x)}(z) &= \sum_{j=1}^d\pos{\min\{x_j, a_j\tau\} - a_jz} + \pos{s\tau - sz} \\
                        &= \sum_{j=1}^d\pos{\min\{x_j, a_j\tau\} - a_jz} + s\tau - sz \\
                        &= \sum_{j=1}^d\pos{\min\{x_j, a_j\tau\} - a_jz} + \sum_{j=1}^d (x_j - \min\{x_j, a_j\tau\}) - sz \\
                        &= \sum_{j=1}^d\pos{x_j - a_jz} - sz = E_{x}(z) - sz \text{.}
\end{align*}
For $t > 0$, let $z = f_{t+s}(x) \le \tau$. By \eqref{eq:proof-root}, $E_x(z) = (t+s)z$, so by the above calculations we get that $E_{\mathcal T_s(x)}(z) = E_x(z) - sz = tz$. Here we once again use the fact that $f_t(\mathcal T_s(x))$ is the unique solution of the above inequality (cf., \eqref{eq:proof-root}), from which we get $f_t(\mathcal T_s(x)) = z = f_{t+s}(x)$. This completes the proof of \eqref{eq:proof-shift}.

The function $t\mapsto f_t(x)$ is decreasing and convex: it is a maximum of functions $B / (t+A)$ with $B \ge 0$ and $A > 0$. Fix $\ell \ge 0, s \in [0, 1]$, and observe that 
\begin{align*}
  l + s = s(\ell + 1) + (1 - s)\ell \text{.}
\end{align*}
Hence
\begin{align*}
 f_{\ell + s}(x) \le (1 - s)f_\ell(x) + sf_{\ell+1}(x) \qquad \text{for all~} \ell \ge 0, s\in[0, 1].
\end{align*}
Using \eqref{eq:proof-shift} and telescoping yields
\begin{align*}
 F(\mathcal T_s(x)) - F(x) &= \sum_{\ell = 0}^{\infty} \bigl(f_\ell(x) - f_{\ell} (\mathcal T_s(x))\bigr)\\
                           &= \sum_{\ell = 0}^{\infty} \bigl(f_\ell(x) - f_{\ell+s}(x)\bigr)\\
                           &\geq  \sum_{\ell = 0}^{\infty} \bigl(f_\ell(x) - (1 - s)f_\ell(x) - sf_{\ell+1}(x)\bigr)\\
                           &= \sum_{\ell = 0}^{\infty}\bigl(f_\ell(x)-f_{\ell+1}(x)\bigr) = s f_0(x),
\end{align*}
This proves \eqref{eq:proof-add-potential}. Next, we take $\ell \ge 0, s \in [0, 1]$, and observe that 
\begin{align*}
  \ell + 1 - s = s\ell + (1 - s)(\ell + 1) \text{.}
\end{align*}
Thus, similarly as before, convexity gives
\begin{align*}
 f_{\ell+1-s}(x)\le s f_\ell(x) + (1-s)f_{\ell + 1}(x) \qquad \text{for all~} \ell \ge 0, s\in[0, 1].
\end{align*}
Therefore
\begin{align*}
 F(\mathcal T_1(x)) - F(\mathcal T_{1-s}(x)) &= \sum_{\ell = 0}^{\infty} \bigl(f_{\ell}(\mathcal T_{1-s}(x)) - f_{\ell}(\mathcal T_1(x))\bigr)\\
                                             &= \sum_{\ell = 0}^{\infty} \bigl(f_{\ell + 1 - s}(x) - f_{\ell + 1}(x)\bigr)\\
                                             &\leq \sum_{\ell = 0}^{\infty} \bigl(s f_\ell(x) + (1-s)f_{\ell + 1}(x)- f_{\ell + 1}(x)\bigr)\\
                                            &\le s \sum_{\ell = 0}^{\infty} \bigl(f_\ell(x)-f_{\ell+1}(x)\bigr) = s f_0(x).
\end{align*}
This proves \eqref{eq:proof-delete-potential}.
\end{proof}

We will also use an observation, that two coordinates of equal density
can be merged, adding their masses and weights, without changing the
entropy. Indeed if their common density is $h$, their combined contribution
to \eqref{eq:proof-excess} is
\begin{align*}
 \pos{a_j(h-z)} + \pos{a_{j'}(h-z)} = \pos{(a_j+a_{j'})(h-z)},
\end{align*}
which is exactly the contribution of the merged coordinates. Thus, the solution to \eqref{eq:proof-root} does not change.

\subsection{Balanced optimal payments}

Next we will construct payment systems that witness certain bounds on the entropy.

\begin{lemma}\label{lem:proof-balanced}
For each subset of candidates $W\subseteq C$, there is an entropy-maximizing loose payment system satisfying
\begin{equation}\label{eq:proof-balanced}
 p_i(c)\le r_i u_i(c) \qquad \text{for all~~} i\in N, c\in W \text{,}
\end{equation}
and
\begin{equation}\label{eq:proof-slack-balance}
 \sum_{i\in N} p_i(c) < q_c \quad \text{only if~~} p_i(c) = r_i u_i(c) \text{~~for all~~} i\in N.
\end{equation}
For such payments,
\begin{equation}\label{eq:proof-reserve}
 f_0(x_i) = r_i \ge \frac{1}{1 + u_i(W)}.
\end{equation}
\end{lemma}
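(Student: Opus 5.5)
Our plan is to start from an arbitrary entropy-maximizer and modify it, never decreasing the objective, until both properties hold. First, a maximizer exists: $\Pay(W)$ is a compact polytope, and $x\mapsto F_a(x)$ is continuous. Continuity follows because each $f_\ell$ is continuous and $f_\ell(x)\le 1/(\ell+\min_j a_j)$ is uniformly tail-controlled. Hence $\sum_i F_{a_i}(x_i)$ is continuous and attains its maximum. The structural fact we use is that $F$ is Schur-type concave in the densities. Moving mass from a coordinate of higher density $x_j/a_j$ to one of lower density, without overshooting equal densities, does not decrease any $f_\ell$ in the wrong direction. Concretely, $E_x(z)=\sum_j(x_j-a_jz)_+$ is convex in each pair $(x_j,x_{j'})$ along such a transfer, and is nonincreasing under the transfer. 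By the characterization \eqref{eq:proof-root}, $f_\ell$ is monotone in $E_x$, so every $f_\ell$ weakly decreases and $F$ weakly increases.

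\textbf{Step 1: property \eqref{eq:proof-balanced}.} Suppose $p_i(c)>r_iu_i(c)$ for some $i,c$. Then the density of coordinate $c$ in $x_i$ exceeds the density $r_i/1$ of the reserve coordinate. We move mass $\delta$ from $p_i(c)$ to $r_i$ until the two densities are equal. This keeps the system loose-feasible, since decreasing a payment never violates $\sum_i p_i(c)\le q_c$. It also does not decrease $F_{a_i}(x_i)$, by the transfer fact above. To make the procedure terminate, we choose among all maximizers one that minimizes a potential, for instance $\sum_{i,c}(p_i(c)-r_iu_i(c))_+$, or more simply one that maximizes $\sum_i r_i$. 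The latter is attainable by compactness. Any violation of \eqref{eq:proof-balanced} would then allow a strict increase of $\sum_i r_i$ while staying optimal, a contradiction.

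\textbf{Step 2: property \eqref{eq:proof-slack-balance}.} Among the maximizers satisfying \eqref{eq:proof-balanced}, suppose $\sum_i p_i(c)<q_c$ and some $i$ has $p_i(c)<r_iu_i(c)$. Then voter $i$ can move a small $\delta$ from $r_i$ to $p_i(c)$, which is allowed by the slack. This is again a transfer from higher density $r_i$ to lower density $p_i(c)/u_i(c)$, so $F$ does not decrease and \eqref{eq:proof-balanced} is preserved for small $\delta$. The difficulty is that this decreases $r_i$ and so conflicts with the tie-break used in Step 1. We therefore need one lexicographic selection that handles both steps. Our first attempt is to pick a maximizer minimizing $\sum_{i,c}|p_i(c)-r_iu_i(c)|$ over the compact set of maximizers. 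The transfer in Step 1 and the transfer in Step 2 both strictly reduce that voter's own deviations. However, lowering $r_i$ changes the targets $r_iu_i(c')$ for the other candidates $c'$. We handle this by equalizing densities of the whole cluster: we scale up all payments $p_i(c')$ with $p_i(c')=r_iu_i(c')$ along with the move. This merges equal-density coordinates, as in the merging observation. Getting this bookkeeping right is the main obstacle we expect.

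\textbf{Step 3: the bound \eqref{eq:proof-reserve}.} Given \eqref{eq:proof-balanced}, every payment coordinate has density $p_i(c)/u_i(c)\le r_i$, and the reserve coordinate has density $r_i$. Hence $f_0(x_i)=\max_j x_j/a_j=r_i$. Moreover,
\[1=r_i+\sum_{c}p_i(c)\le r_i\bigl(1+u_i(W)\bigr),\]
which gives $r_i\ge 1/(1+u_i(W))$.
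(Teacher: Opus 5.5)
Your Steps~1 and~3 are fine; Step~3 is exactly the paper's argument. Step~2, however, has a genuine gap. You never produce a selection under which \eqref{eq:proof-balanced} and \eqref{eq:proof-slack-balance} hold \emph{simultaneously}, and you acknowledge this yourself. Concretely, you claim that moving $\delta$ from $r_i$ to $p_i(c)$ preserves \eqref{eq:proof-balanced}. This is false as soon as some other $c'$ is tight, $p_i(c')=r_iu_i(c')$, since afterwards $p_i(c')>(r_i-\delta)u_i(c')$. Your fallback potential $\sum_{i,c}|p_i(c)-r_iu_i(c)|$ does not rescue this. Under the Step~1 transfer it drops by $\delta(1+u_i(c))$ at $c$. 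But the targets $r_iu_i(c')$ all move up, so it rises by $\delta u_i(c')$ at every $c'$ at or below its target, and the total can increase. Your ``cluster'' repair is also stated in the wrong direction: the tight payments have to be scaled \emph{down} together with $r_i$. You also do not say which quantity then strictly improves.

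The missing idea is a tie-breaker that does not mention the targets $r_iu_i(c)$ at all. The paper picks, among all maximizers, one minimizing $\sum_i\bigl(r_i^2+\sum_{c}p_i(c)^2/u_i(c)\bigr)$, i.e.\ $\sum_j x_j^2/a_j$ for each voter. Moving a small $\delta$ from coordinate $j$ to coordinate $k$ of the same voter changes this at rate $2(x_k/a_k-x_j/a_j)$. That rate is negative for every transfer from higher to lower density, in either direction. So a violation of \eqref{eq:proof-balanced} (transfer payment to reserve) and a violation of \eqref{eq:proof-slack-balance} (transfer reserve to payment, feasible thanks to the slack) each yield another maximizer with strictly smaller potential. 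Hence a single minimizer has both properties, with no bookkeeping.

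Your two-stage route can also be completed. Among the maximizers satisfying \eqref{eq:proof-balanced}, take one minimizing $\sum_i r_i$; this set is nonempty by Step~1 and compact because the condition is closed. Suppose \eqref{eq:proof-slack-balance} fails at $(i,c)$. Replace $r_i$ by $r_i-\delta$ and each tight $p_i(c')$ by $(r_i-\delta)u_i(c')$, and add the freed mass to $p_i(c)$. By the merging observation this is a single higher-to-lower density transfer, so $F$ does not drop. For small $\delta$ it preserves \eqref{eq:proof-balanced} and the caps, and it strictly lowers $\sum_i r_i$, a contradiction.
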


\begin{proof}
First consider an arbitrary unit-mass vector. Observe what happens when we transfer a sufficiently
small amount of mass from a coordinate of higher density to a
coordinate of lower density, without letting their densities cross.
For each value of $z$, either both coordinates contribute to $E_x(z)$,
neither contributes, or only the higher-density coordinate contributes.
During the transfer the excess mass therefore remains constant or
decreases. Thus, every $f_t$, including the maximum density $f_0$,
weakly decreases, and $F$ weakly increases. 
The same transfer strictly decreases the following quantity $\sum_{j=1}^{d} \nicefrac{x_j^2}{a_j}$ (it is easy to see that by writing down the derivative of the transfer).

Among the payment systems attaining $\Entropy(W)$, choose one
minimizing
\begin{equation}\label{eq:proof-balance-quadratic}
 \sum_{i\in N}\left(r_i^2+
 \sum_{c\in W\colon u_i(c) > 0}\frac{p_i(c)^2}{u_i(c)}\right).
\end{equation}
If $p_i(c)/u_i(c)>r_i$ for a positively valued candidate, a small transfer
from its payment to the reserve preserves all payment constraints,
weakly increases entropy, and strictly decreases
\eqref{eq:proof-balance-quadratic}. This contradicts the choice of
payments and proves \eqref{eq:proof-balanced}.

If a candidate $c$ is underfunded and $p_i(c)/u_i(c)<r_i$ for a voter
valuing $c$, a sufficiently small transfer in the opposite direction
is feasible and gives the same contradiction. Hence
\eqref{eq:proof-slack-balance} holds.

The reserve has weight one, so \eqref{eq:proof-balanced} gives
$f_0(x_i) = r_i$. Finally, again by \eqref{eq:proof-balanced} we get:
\begin{align*}
 1 = r_i + \sum_{c\in W}p_i(c) \le r_i \bigl(1 + u_i(W)\bigr) \text{,}
\end{align*}
which completes the proof
\end{proof}

We call payments satisfying this lemma \emph{balanced}.

\subsection{Change of the Entropy due to Adding a Candidate}

\begin{lemma}\label{lem:proof-addition}
Let $(p, r)$ be a balanced optimal payment system for $W\subseteq C$, and let $c \notin W$. Let
\begin{equation}\label{eq:proof-reserve-load}
 R_c = \sum_{i \in N}r_i u_i(c) \text{.}
\end{equation}
Then
\begin{equation}\label{eq:proof-addition-bound}
 \Entropy(W + c) - \Entropy(W) \ge \min\{R_c, q_c\}.
\end{equation}
If $R_c > q_c$, then the left-hand side is strictly greater than $q_c$.
\end{lemma}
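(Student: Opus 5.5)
We construct an explicit loose payment system for $W+c$ starting from the balanced optimal system $(p,r)$ for $W$, and lower-bound its entropy with \eqref{eq:proof-add-potential} of \cref{lem:proof-water-filling}. Since $\Entropy(W+c)$ is a maximum over $\Pay(W+c)$, any feasible witness gives a lower bound. Let $\lambda=\min\{1,q_c/R_c\}$ (take $\lambda=1$ if $R_c=0$, where the claim is trivial). For each voter $i$ with $u_i(c)>0$, apply $\mathcal T_{s_i}$ with $s_i=\lambda u_i(c)\in[0,1]$ to $x_i$, with $a_i$ extended by the new weight $s_i$.

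The key observation is that balancedness makes $\mathcal T_{s_i}$ act only on the reserve. By \eqref{eq:proof-balanced}, every coordinate has density at most $r_i$, and the reserve, with weight $1$, has density exactly $r_i$. So $\tau_i=f_{s_i}(x_i)$ solves $E_{x_i}(\tau)=s_i\tau$. For $\tau<r_i$ only coordinates of density $>\tau$ contribute, and I would solve explicitly. The resulting vector has the original payments truncated at density $\tau_i$, and a new coordinate of weight $s_i$ and mass $s_i\tau_i$. I then read this as a payment $p_i(c)=s_i\tau_i$, with weight $s_i$ instead of $u_i(c)$. This is where the obstacle sits. The entropy uses the actual weight $u_i(c)$, and when $\lambda<1$ the weights differ.

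I would handle this with the merging observation and monotonicity. Replacing the weight $s_i$ by the larger $u_i(c)$ can only decrease every $f_t$, since each ratio in \eqref{eq:water-level} has a weakly larger denominator. Hence $F$ weakly increases, and the witness keeps at least the entropy gain from \eqref{eq:proof-add-potential}. The truncated mass is moved into the new candidate's payment and the reserve is adjusted, so $r_i+\sum p_i=1$ and nonnegativity hold. Payments to other candidates only decrease, so their constraints $\sum_i p_i(c')\le q_{c'}$ still hold. For the new candidate, $\sum_i s_i\tau_i\le \lambda\sum_i u_i(c) r_i=\lambda R_c\le q_c$, using $\tau_i\le f_0(x_i)=r_i$. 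So the witness lies in $\Pay(W+c)$.

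Summing \eqref{eq:proof-add-potential} over voters gives
\[
\Entropy(W+c)-\Entropy(W)\ \ge\ \sum_i s_i f_0(x_i)=\lambda\sum_i u_i(c)r_i=\lambda R_c=\min\{R_c,q_c\},
\]
using $f_0(x_i)=r_i$ from \eqref{eq:proof-reserve}.

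For the strict part, when $R_c>q_c$ I would show the inequality in \eqref{eq:proof-add-potential} is strict for some voter. The convexity step $f_{\ell+s}\le(1-s)f_\ell+sf_{\ell+1}$ is strict whenever $0<s<1$ and $t\mapsto f_t$ is strictly convex on $[\ell,\ell+1]$. With $\lambda<1$ and $u_i(c)\le1$, every contributing voter has $s_i<1$. The map $f_t$ is a maximum of functions $B/(t+A)$ with $B>0$ (the reserve term is positive because $r_i>0$), and each such function is strictly convex. So strictness holds at $\ell=0$ for any voter with $u_i(c)r_i>0$, and such a voter exists because $R_c>0$. This yields a strict inequality.
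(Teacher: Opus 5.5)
Your proof is correct, but it handles the case $R_c>q_c$ differently from the paper.

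\textbf{The paper's route.} It always applies the full transformation $\mathcal T_{u_i(c)}$. The new coordinate then already carries the correct weight $u_i(c)$, so \eqref{eq:proof-add-potential} applies verbatim. If the resulting payment $P^+$ to $c$ exceeds $q_c$, the paper mixes the new system with the original one using weight $\theta=q_c/P^+$. This mixing step tacitly uses concavity of $F_a$ in the masses; that is true, since each $f^a_\ell$ is a maximum of linear functions of $x$, but it is not stated. Strictness then comes cheaply from the strict decrease of $t\mapsto f_t$: this gives $P^+<R_c$, hence $\theta G\ge q_cR_c/P^+>q_c$.

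\textbf{Your route.} You shrink the parameter to $s_i=\lambda u_i(c)$. This enforces the cap directly and needs no concavity. The price is two extra facts. First, monotonicity of $f^a_t$ in the weights, to repair the mismatch between $s_i$ and $u_i(c)$. Second, a strict form of \eqref{eq:proof-add-potential} for $s\in(0,1)$. You correctly derive the latter from strict convexity of $t\mapsto f_t$, viewed as a finite maximum of functions $B/(t+A)$ with $B>0$ (sets $J$ with $B=0$ never attain the maximum, since $f_t>0$). For $R_c\le q_c$ the two constructions coincide.

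\textbf{Two small remarks.} First, one aside in your write-up is wrong, though you never use it. Balancedness does \emph{not} make $\mathcal T_{s_i}$ act only on the reserve. Since $\tau_i<r_i$, every coordinate of density in $(\tau_i,r_i]$ is truncated as well; for instance, payments to underfunded candidates sit exactly at density $r_i$ by \eqref{eq:proof-slack-balance}. Your feasibility check correctly treats the general truncation, so nothing breaks.

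Second, like the paper's argument, your strictness argument needs some $s_i>0$, i.e.\ $q_c>0$. For a zero-cost candidate valued by some voter, the strict claim actually fails: all payments to $c$ are forced to zero, so $\Entropy(W+c)=\Entropy(W)$. This is a tacit hypothesis of the statement, not a gap in your proof.
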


\begin{proof}
For each voter with $u_i(c)>0$, apply $\mathcal T_{u_i(c)}$ to $x_i$,
interpreting the added coordinate as their payment to $c$. Retain the
payment vectors of voters with $u_i(c)=0$. By
\eqref{eq:proof-add-potential} and \eqref{eq:proof-reserve}, the total
entropy gain $G$ of this system satisfies
\begin{equation}\label{eq:proof-raw-gain}
 G = \sum_{i \in N} F(T_{u_i(c)}(x_i)) - F(x) \geq \sum_{i \in N} u_i(c)f_0(x_i) = \sum_{i \in N} u_i(c)r_i =  R_c.
\end{equation}

By the definition of the transformation, the total new payment to $c$ is
\begin{equation}\label{eq:proof-raw-payment}
 P^+ = \sum_{i\colon u_i(c) > 0}u_i(c)f_{u_i(c)}(x_i) \leq \sum_{i\colon u_i(c) > 0} u_i(c) f_{0}(x_i) \leq R_c.
\end{equation}

If $R_c > 0$, at least one $u_i(c)$ is strictly positive, and so at least one inequality $f_{u_i(c)}(x_i) \leq f_{0}(x_i)$ is strict.
As a result, this gives the strict inequality $P^+ < R_c$.

All existing candidate payments weakly decrease, so only the cap of
$c$ might be violated. If $P^+ \le q_c$, the new system is feasible for $W + c$ and gives
\begin{align*}
 \Entropy(W + c) - \Entropy(W) \ge G \ge R_c.
\end{align*}
This proves both conclusions in this case. 

If $P^+ > q_c$, take the
convex combination of the new system and the original system with weights equal to $\theta = q_c / P^+$ and $1 - \theta$, respectively. All payment constraints are then satisfied. We get:
\begin{align*}
 \Entropy(W + c) - \Entropy(W) \geq \theta G \geq \frac{q_c R_c}{P^+} > q_c \text{.}
\end{align*}
The inequality is strict because $P^+ < R_c$. This completes the proof.
\end{proof}

\subsection{Loss of Entropy due to Deleting a Candidate}

\begin{lemma}\label{lem:proof-deletion}
If a set $W\subseteq C$ satisfies $\sum_{c\in W}q_c>n$, then there
exists $d\in W$ such that
\begin{equation}\label{eq:proof-deletion-bound}
 \Entropy(W)-\Entropy(W-d)<q_d.
\end{equation}
\end{lemma}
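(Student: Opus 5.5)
The plan is to exhibit a candidate $d\in W$ whose current payment is strictly below its cap and whose payments sit exactly at the reserve density of every voter. Deleting $d$ then amounts to pouring these payments back into the reserves, and I will show that this costs each voter at most the amount she paid for $d$.

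\emph{Choosing $d$.} Fix a balanced entropy-maximizing loose payment system $(p,r)$ for $W$, which exists by \cref{lem:proof-balanced}. Every voter starts with one unit of money and $r_i\ge 0$, so $\sum_{c\in W}\sum_{i\in N}p_i(c)\le n<\sum_{c\in W}q_c$. Hence some $d\in W$ is underfunded, $P_d:=\sum_i p_i(d)<q_d$. By \eqref{eq:proof-slack-balance} we get $p_i(d)=r_iu_i(d)$ for every $i$. So for each voter with $u_i(d)>0$, the $d$-coordinate of $x_i$ has density $r_i=f_0(x_i)$, the same density as the reserve.

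\emph{The comparison system for $W-d$.} For $W-d$ I use the system obtained by moving each $p_i(d)$ back into $r_i$. Every other payment is unchanged, so all constraints hold and this system lies in $\Pay(W-d)$. Call the resulting vectors $y_i$; they give
\[
\Entropy(W-d)\ge\sum_i F_{a_i}(y_i).
\]
It then suffices to prove, for each voter $i$ with $u:=u_i(d)>0$,
\[
F(x_i)-F(y_i)\le u\,r_i=p_i(d).
\]
Summing over voters then gives
\[
\Entropy(W)-\Entropy(W-d)\le P_d<q_d.
\]
Voters with $u_i(d)=0$ contribute nothing.

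\emph{Per-voter bound.} By the merging observation, $x_i$ has the same entropy as the vector with a single reserve-type coordinate of weight $1+u$ and mass $r_i(1+u)$, whose density $r_i$ is the maximal one. The vector $y_i$ has the same masses, but its reserve has weight $1$. I will express both vectors as images of one auxiliary vector $w$ under the operator of \cref{lem:proof-water-filling}, namely $x_i\simeq\mathcal T_1(w)$ and $y_i\simeq\mathcal T_{1-u}(w)$, with $f_0(w)\le r_i$. Since $u\le1$, \eqref{eq:proof-delete-potential} with $s=u$ then yields the desired bound $u f_0(w)\le u r_i$.

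\emph{Main obstacle.} The hard step is constructing $w$, together with the accounting of what $\mathcal T_1$ and $\mathcal T_{1-u}$ do to the reserve coordinate. The natural first attempt is an excess-mass calculation as in the proof of \cref{lem:proof-water-filling}. For $z\le r_i$ one checks $E_{x_i}(z)=E_{y_i}(z)-uz$, so $f_t(x_i)=f_{t+u}(y_i)$ for all $t$, that is, $x_i$ behaves like $\mathcal T_u(y_i)$. I would then push this identity through the convexity of $t\mapsto f_t$, writing the entropy difference as $\sum_\ell(f_\ell(y_i)-f_{\ell+u}(y_i))$. I would bound the terms with $\ell\ge1$ by $u(f_{\ell-1}(x_i)-f_\ell(x_i))$, which telescope to at most $u r_i$.

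The boundary term $f_0(y_i)-f_u(y_i)$ is delicate. Handling it, and showing it is absorbed because $f_0(y_i)$ is governed by the reserve, is where I expect the real work to lie. If the direct bound fails, I would instead decompose the reserve weight as $1=(1-u)+u$ and realize $y_i$ from $w$ via $\mathcal T_{1-u}$, so that \eqref{eq:proof-delete-potential} applies verbatim.
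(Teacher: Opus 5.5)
There is a genuine gap. The per-voter inequality $F(x_i)-F(y_i)\le u\,r_i=p_i(d)$, on which everything rests, is false for every voter with $u=u_i(d)>0$.

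Your excess-mass computation shows more than an analogy. All non-reserve coordinates of $y_i$ have density at most $r_i$, so $E_{y_i}(r_i)=ur_i$ and hence $f_u(y_i)=r_i$. Truncating $y_i$ at density $r_i$ and adding a coordinate of weight $u$ and mass $ur_i$ reproduces $x_i$ exactly, so $x_i=\mathcal T_u(y_i)$. Now \eqref{eq:proof-add-potential} gives the opposite inequality
\[
 F(x_i)-F(y_i)\;\ge\;u\,f_0(y_i)\;=\;u(1+u)\,r_i\;>\;u\,r_i,
\]
because $r_i>0$ by \eqref{eq:proof-reserve}.

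Your ``boundary term'' is exactly where this shows. The term $f_0(y_i)-f_u(y_i)=(1+u)r_i-r_i=ur_i$ already uses up the whole target, and the remaining terms are strictly positive. Carried through, your telescoping yields only $F(x_i)-F(y_i)\le 2ur_i$. Already for one voter with $W=\{d\}$ and $u=1$ you have $x=(\tfrac12,\tfrac12)$, $y=(1)$, and $F(x)-F(y)=1>\tfrac12=p(d)$.

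The fallback via $w$ cannot work either. By \eqref{eq:proof-shift}, $\mathcal T_1(w)\simeq x_i$ forces $f_1(w)=f_0(x_i)=r_i$. Then $E_w(r_i)=r_i>0$, and hence $f_0(w)>r_i$.

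The defect is not only in the comparison system but in the choice of $d$: an arbitrary underfunded candidate need not satisfy the lemma. Take one voter, $W=\{d_1,d_2\}$ with both utilities equal to $1$, $q_{d_1}=0.4$ and $q_{d_2}=0.9$. The unique balanced optimum is $r=p(d_1)=p(d_2)=\tfrac13$, so both candidates are underfunded. Yet $\Entropy(W)=\tfrac32$ and $\Entropy(W-d_1)=1$, so deleting $d_1$ loses $\tfrac12\ge q_{d_1}$. Only $d_2$ works, with loss $0.6$.

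The missing idea is that the loss must be charged to payments that also absorb the voters' reserves. Then $d$ must be chosen so that even these larger payments stay below $q_d$. This is where $\sum_c q_c>n$ is used substantively, not merely to produce some underfunded candidate.

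The paper does this in three steps:
\begin{itemize}
\item It pours the entire reserves of the voters in $N(D)$ into the underfunded candidates of a set $D\subseteq U$ maximizing $\Gamma(D)$. This is feasible by max-flow/min-cut, and $\Gamma(D)\ge\Gamma(U)>0$.
\item It takes the redistribution minimizing \eqref{eq:proof-redistribution-quadratic}.
\item It deletes a candidate $d$ that is still underfunded.
\end{itemize}
Then $d$ has maximum density for every voter valuing it. The reserve-free redistributed vector is the right $w$, with $x_i=\mathcal T_1(w)$. Finally, \eqref{eq:proof-delete-potential} bounds the loss by $u\,f_0(w)=\widehat p_i(d)$, and these sum to less than $q_d$. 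In the example this gives $\widehat p(d_1)=0.4$ and $\widehat p(d_2)=0.6<0.9$, so it selects $d_2$.
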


\begin{proof}
Take balanced optimal payments for $W$, and write
\begin{align*}
 P_c = \sum_{i \in N}p_i(c), \qquad U = \{c \in W\colon P_c < q_c \}.
\end{align*}
We redistribute some reserves while respecting the caps, choose a
candidate that is still underfunded, and then delete it with entropy
loss bounded by its redistributed payments.

\paragraph{Redistributing Reserves.}
For $D\subseteq U$, let
\begin{align*}
 N(D) = \{i \in N\colon \text{there exists~} c \in D \text{~with~} u_i(c) > 0\}
\end{align*}
and set
\begin{equation}\label{eq:proof-gamma}
 \Gamma(D) = \sum_{c \in D}(q_c - P_c) - \sum_{i \in N(D)}r_i \text{.}
\end{equation}
Every candidate outside $U$ receives its full cap. Therefore
\begin{align*}
 \Gamma(U) = \sum_{c\in W}q_c - \sum_{c \in W}P_c - \sum_{i \in N(U)}r_i = \sum_{c \in W}q_c - n + \sum_{i \notin N(U)}r_i > 0.
\end{align*}

Choose $D\subseteq U$ maximizing $\Gamma(D)$.  For every $R\subseteq D$, maximality gives $\Gamma(D) \ge \Gamma(D\setminus R)$, or equivalently
\begin{align*}
  \sum_{c \in D}(q_c - P_c) - \sum_{i \in N(D)}r_i \geq \sum_{c \in D\setminus R}(q_c - P_c) - \sum_{i \in N(D\setminus R)}r_i
\end{align*}
After reformulation:
\begin{equation}\label{eq:proof-flow-condition}
 \sum_{c \in R}(q_c - P_c) \ge \sum_{i \in N(D)\setminus N(D\setminus R)}r_i.
\end{equation}

We will now show that every voter in $N(D)$ can distribute
their entire reserve among positively valued candidates in $D$,
without exceeding the remaining candidate capacities. To prove this,
construct a flow network whose source is connected to each
$i\in N(D)$ by an edge of capacity $r_i$. Connect $i$ to $c\in D$
by an unlimited-capacity edge precisely when $u_i(c)>0$, and connect
$c$ to the sink by an edge of capacity $q_c-P_c$.

Consider a cut of finite capacity. Let $R$ be its source-side
candidates and $I$ its source-side voters. No unlimited-capacity edge
can cross the cut, so if $i \in I$ and $c \in D \setminus R$ then $u_{i}(c) = 0$.
In particular, $i \notin N(D\setminus R)$, and so
$I\subseteq N(D)\setminus N(D\setminus R)$. Its capacity is therefore
\begin{align*}
 \sum_{i\in N(D)\setminus I}r_i + \sum_{c\in R}(q_c-P_c) &\geq \sum_{i \in N(D) \setminus I}r_i + \sum_{i \in N(D)\setminus N(D\setminus R)}r_i \\
                                                        & \geq \sum_{i \in N(D) \setminus I}r_i + \sum_{i\in I}r_i
 = \sum_{i\in N(D)}r_i,
\end{align*}
where the second inequality uses \eqref{eq:proof-flow-condition}. The
max-flow/min-cut theorem gives a flow saturating all source--voter
edges, as required.

Write $\widehat p_i(c)$ for the payments after this redistribution.
Payments outside $D$ are unchanged. Within $D$, balancedness and
$D\subseteq U$ give
\begin{equation}\label{eq:proof-redistributed-lower}
 \widehat p_i(c)\ge p_i(c) = r_i u_i(c)\qquad \text{~for all~} i\in N, c\in D.
\end{equation}
Only positively valued candidates receive additional payments, and
\begin{equation}\label{eq:proof-redistributed-budget}
 \sum_{c\in D}\widehat p_i(c) =\sum_{c\in D} p_i(c) +r_i \qquad \text{~for all~} i\in N(D).
\end{equation}
All candidate caps are respected. Moreover,
\begin{equation}\label{eq:proof-remaining-slack}
 \sum_{c\in D}\left(q_c - \sum_{i \in N}\widehat p_i(c)\right) =\Gamma(D)>0.
\end{equation}
Thus some candidate remains underfunded after redistribution.

\paragraph{Choosing the Candidate to Delete.}
Among all such redistributions, choose one minimizing
\begin{equation}\label{eq:proof-redistribution-quadratic}
 \sum_{i\in N}\sum_{c\in D\colon u_i(c) > 0} \frac{\widehat p_i(c)^2}{u_i(c)}.
\end{equation}
By \eqref{eq:proof-remaining-slack}, we there exists $d\in D$ with
\begin{equation}\label{eq:proof-chosen-candidate}
 \sum_i\widehat p_i(d)<q_d.
\end{equation}

For every voter $i$ with $u_i(d)>0$ it must hold that:
\begin{equation}\label{eq:proof-maximum-density}
 \frac{\widehat p_i(d)}{u_i(d)} \ge\frac{\widehat p_i(c)}{u_i(c)} \qquad \text{~for all~} c\in D\colon u_i(c)>0.
\end{equation}
Otherwise, a candidate $c$ of larger density would have a payment
strictly above its lower bound $r_i u_i(c)$, because the density of
$d$ is at least $r_i$ (because it was initially at least $r_i$, and by transfers it could only increase). Transferring a sufficiently small amount from
$c$ to $d$ preserves the lower bounds and the voter's total payment.
It also preserves the caps, since $d$ has slack. Its derivative in
\eqref{eq:proof-redistribution-quadratic} is negative, contradicting
minimality. This proves \eqref{eq:proof-maximum-density}.

Outside $D$, payments are unchanged and their densities are at most
$r_i$ by \eqref{eq:proof-balanced}. Consequently, for every voter
valuing $d$, its redistributed coordinate has maximum density among
all of that voter's payments.

\paragraph{Replacing the Deleted Coordinate by a Reserve.}
Let us fix a voter $i$ with $u_i(d)>0$, and let $y_i$ be the vector of their
redistributed payments with no reserve. This vector has
total mass one. Its excess
mass at density $r_i$ is
\begin{equation}\label{eq:proof-recover-level}
 E_{y_i}(r_i) =\sum_{c\in D}\bigl(\widehat p_i(c) - r_i u_i(c)\bigr) = r_i.
\end{equation}
This is because for candidates in $c \in D$ their original payments were equal to $p_i(c) = r_i u_i(c)$, and the total value of $r_i$ was redistributed. 

From the above equality, and by \eqref{eq:proof-root} we get that $f_1(y_i)=r_i$. The operator $T_1$ thus redistributes the mass above the density of $r_i$.  
As a result, we get that applying $T_1$ to the vector $y_i$ simply recovers the initial vector $x_i$:
\begin{equation}\label{eq:proof-recover-vector}
 x_i=\mathcal T_1(y_i).
\end{equation}

Let $s = u_i(d) \in (0,1]$. If $s < 1$, we apply
$\mathcal T_{1-s}$ to $y_i$ and let $\tau = f_{1-s}(y_i)$. Since $d$
has maximum density, its truncated coordinate (after applying $\mathcal T_{1-s}$) has weight $s$ and
mass $s\tau$. The new coordinate has weight $1-s$ and mass
$(1-s)\tau$. We merge these two coordinates of density $\tau$ into a
single coordinate of weight one and mass $\tau$, and we use that
coordinate as the new reserve. This removes candidate $d$.
If $s=1$, we simply get $d$ as the new reserve.

Let us call the resulting payment vector $x_i^-$. Equal-density merging
preserves entropy, so in both cases
\begin{align*}
 F(x_i^-)=F(\mathcal T_{1-s}(y_i)).
\end{align*}
By \eqref{eq:proof-delete-potential},
\eqref{eq:proof-recover-vector}, and the maximum-density property,
\begin{align}
 F(x_i) - F(x_i^-) = F(\mathcal T_1(y_i))-F(\mathcal T_{1-s}(y_i)) \le s f_0(y_i) =s\,\frac{\widehat p_i(d)}{u_i(d)} = \widehat p_i(d).\label{eq:proof-voter-deletion-loss}
\end{align}

For voters with $u_i(d)=0$, we keep their original payment vectors. This way we constructed a loose payment system for $W-d$.

\paragraph{Asseasing the Total Loss.}

The entropy of every voter not valuing $d$ is unchanged. Therefore
\begin{align*}
 \Entropy(W)-\Entropy(W-d) \le \sum_{i\colon u_i(d) > 0} \bigl(F(x_i) - F(x_i^-)\bigr) \le \sum_i\widehat p_i(d) < q_d,
\end{align*}
using \eqref{eq:proof-voter-deletion-loss} and
\eqref{eq:proof-chosen-candidate}.
This proves the lemma.
\end{proof}

\subsection{Completing the Proof}

\begin{proof}[Proof of \cref{thm:main}]
Let $W$ be any set selected by the rule. First, $W$ is feasible. Indeed, for the sake of contradiction assume that
\begin{align*}
 \sum_{c\in W}q_c=\frac nb\cost(W)>n,
\end{align*}
By \cref{lem:proof-deletion} there exists $d\in W$ such that
$\Entropy(W) - \Entropy(W - d) < q_d$. It follows that
\begin{align*}
 \Obj(W - d) - \Obj(W) =q_d - \bigl(\Entropy(W)-\Entropy(W - d)\bigr) > 0,
\end{align*}
contradicting the maximality of $\Obj(W)$.

Now, choose balanced optimal payments for $W$ as in
\cref{lem:proof-balanced}. We claim that
\begin{equation}\label{eq:proof-losing-certificate}
 \sum_{i \in N}r_i u_i(c) < q_c \qquad \text{for all~} c \notin W.
\end{equation}

Indeed, fix a candidate $c \notin W$, and let us denote the left-hand side by $R_c$.
If we would have that $R_c > q_c$, then \cref{lem:proof-addition} would imply
\begin{align*}
 \Obj(W + c) - \Obj(W) = \Entropy(W + c) - \Entropy(W) -q_c > 0,
\end{align*} 
a contradiction.

If $R_c = q_c$, the same lemma yields
$\Obj(W + c) \ge \Obj(W)$. Since in the case of equal scores we pick the solution with the higher cost, we get once again a contradiction. 

Now let $S\subseteq N$ be nonempty and let $T\subseteq C$ satisfy
\begin{equation}\label{eq:proof-affordable}
 \cost(T) \le b\frac{|S|}{n}.
\end{equation}
Suppose, for a contradiction, that
\begin{equation}\label{eq:proof-blocking}
 u_i(T) \ge u_i(W) + 1 \qquad \text{for all~} i \in S.
\end{equation}
Define personalized prices for every candidate by
\begin{align*}
 \pi_i(c)=
 \begin{cases}
 p_i(c), & c \in W,\\
 r_i u_i(c), & c \notin W,
 \end{cases}
\end{align*}

For each $i\in S$, the unit-budget identity and
\eqref{eq:proof-balanced} give
\begin{align*}
 \pi_i(T) =\sum_{c\in T}\pi_i(c) 
 &=\sum_{c\in T\cap W}p_i(c)+r_i u_i(T\setminus W)\\
 &=1-r_i-\sum_{c\in W\setminus T}p_i(c)
      +r_i u_i(T\setminus W)\\
 &\ge1-r_i-r_i u_i(W\setminus T)+r_i u_i(T\setminus W)\\
 &=1-r_i+r_i\bigl(u_i(T)-u_i(W)\bigr)
 \ge1,
\end{align*}
where the last inequality uses \eqref{eq:proof-blocking}.
Thus $\sum_{i\in S}\pi_i(T)\ge|S|$.

On the other hand, $T\setminus W\ne\varnothing$, since utilities
are nonnegative and \eqref{eq:proof-blocking} would be impossible
for $T\subseteq W$. A selected candidate's personalized prices sum
to at most $q_c$ by the loose payment constraints, and an unselected
candidate's prices sum to strictly less than $q_c$ by
\eqref{eq:proof-losing-certificate}. All prices are nonnegative.
Consequently,
\[
 \sum_{i\in S}\pi_i(T)
 \le\sum_{i\in N}\pi_i(T)
 <\sum_{c\in T}q_c
 =\frac nb\cost(T)
 \le|S|,
\]
where the final inequality follows from \eqref{eq:proof-affordable}.
This is a contradiction. Hence some voter $i\in S$ satisfies
$u_i(T)<u_i(W)+1$. Together with the feasibility of $W$, this is
precisely core-up-to-one as stated in \cref{def:additive-core}.
\end{proof}

\end{document}